\newcommand{\bC}{\mathbb{C}}
\newcommand{\bR}{\mathbb{R}}
\newcommand{\bQ}{\mathbb{Q}}
\newcommand{\bN}{\mathbb{N}}
\newcommand{\real}{{\rm real}}
\newcommand{\imag}{{\rm imag}}
\newcommand{\maple}{{\tt Maple}}
\newcommand{\codehome}{\url{https://github.com/P-Edwards/EvalCertification}}
\spnewtheorem{algorithm}[theorem]{Algorithm}{\bfseries}{}
\begin{document}
\title{Certified evaluations of H\"older continuous functions
at roots of polynomials}
\titlerunning{Certified evaluations}

\author{Parker B. Edwards\inst{1}\orcidID{0000-0001-6875-5328} \and
\mbox{Jonathan D. Hauenstein}\inst{1}\orcidID{0000-0002-9252-8210} 
\and 
\mbox{Clifford D. Smyth}\inst{2}\orcidID{0000-0003-1486-7900}}
\authorrunning{P.B. Edwards et al.}

\institute{Department of Applied and Computational Mathematics and Statistics, University of Notre Dame, Notre Dame, IN 46556 \email{\{parker.edwards,hauenstein\}@nd.edu} \\
\url{https://sites.nd.edu/parker-edwards}, \url{https://www.nd.edu/\~jhauenst} \and 
Department of Mathematics and Statistics, University of North Carolina at Greensboro, 
Greensboro, NC 27402
 \email{cdsmyth@uncg.edu}\\
 \url{https://www.uncg.edu/~cdsmyth/}}

\maketitle              
\begin{abstract}
Various methods can obtain certified estimates for roots of polynomials. Many applications in science and engineering additionally utilize the value of functions evaluated at roots.  For example, critical values are obtained by
evaluating an objective function at critical points.
For analytic evaluation functions, Newton's method naturally applies to yield certified estimates.
These estimates no longer apply, however,
for H\"older continuous functions, 
which are a generalization of Lipschitz continuous functions
where continuous derivatives need not exist.
This work develops and analyzes 
an alternative approach
for certified estimates
of evaluating locally H\"older continuous functions
at roots of polynomials.~An implementation of the method in {\tt Maple}~demonstrates~efficacy~and~efficiency.  

\keywords{Roots of polynomials  \and H\"older continuous functions \and Certified evaluations.}
\end{abstract}

\section{Introduction}\label{sec:Intro}

For a univariate polynomial $p(x)$, the Abel-Ruffini theorem posits 
that the roots cannot be expressed in terms of radicals for general polynomials of degree 
at least~$5$.  A simple illustration of this is that the solutions of the quintic equation
\begin{equation}\label{eq:SimpleQuintic}
p(x) = x^5 - x - 1 = 0
\end{equation}
cannot be expressed in radicals.
Thus, a common technique is to compute numerical approximations with certified bounds 
for the roots of a polynomial.  Some approaches based on Newton's method
are the Kantorovich theorem \cite{Kantorovich} 
and Smale's $\alpha$-theory \cite{SmaleOnePoint}.
Kantorovich's approach is based on bounds for a twice-differentiable function
in an open set while Smale's approach only uses local estimates
at one point coupled with the analyticity of the function.  
Certified approximations of roots of polynomials
can also be obtained using interval methods such as \cite{CircularRoots,IntroInterval,Rump}
along with the Krawczyk operator \cite{Krawczyk,Moore1977}.  

Although computing certified estimates for roots of polynomials
is important, many applications in science and engineering utilize
the roots in further computations.  As an illustrative example, 
consider the optimization problem
\begin{equation}\label{eq:SimpleOpt}
\min \{21x^8 - 42x^4 - 56x^3 + 3~:~x\in\bR\}.
\end{equation}
The global minimum is the minimum of the critical values
which are obtained by evaluating
the objective function $g(x) = 21x^8 - 42x^4 - 56x^3 + 3$
at its critical points, i.e. at the real roots of $g'(x) = 168 x^2 (x^5 - x - 1)$.
Since the quintic in~\eqref{eq:SimpleQuintic} is a factor of $g'(x)$, 
only approximations of the roots of $g'$ can be computed. One must translate these approximate roots to certified evaluations of the objective function $g(x)$ evaluated at the roots of the polynomial~$g'(x)$ to obtain certified bounds on the global minimum of
\eqref{eq:SimpleOpt}.

One approach for computing a certified evaluation of $f(x)$
at roots of a polynomial $p(x)$ is via certified estimates
of solutions to the multivariate system
\begin{equation}\label{eq:MultivariateEval}
\left[
\begin{array}{c}
p(x) \\
y - f(x) \end{array}\right] = 0.
\end{equation}
For sufficiently smooth $f$,
approaches based on Newton's method
generate certified estimates.
The downside is that this requires $f$ to be differentiable.
Alternatively, one can follow a two-stage 
procedure: develop certified
bounds of a root of~$p(x)$ and then use interval
evaluation methods, e.g., see \cite[Chap.~5]{IntroInterval},
to develop certified bounds on $f(x)$
evaluated at the root. This two-stage approach does not allow direct control on the precision of the certified~evaluation. 

The approach in this paper considers certified evaluations of locally H\"older continuous
functions at roots of polynomials and links the desired output
of the certified evaluation with the error in the 
approximation of the root.  H\"older continuous functions are a generalization of Lipschitz functions
which are indeed continuous, but they need not be 
differentiable anywhere. E.g. Weierstrass function is locally H\"older.
Satisfying the local H\"older continuity condition does not guarantee that a function can be evaluated
exactly for, say, rational input. Our approach also incorporates
numerical evaluation error into the certified bounds to address this issue.

The rest of the paper is organized as follows.
Section~\ref{sec:Holder} describes the necessary analysis 
of locally H\"older continuous functions, with a particular 
focus on polynomials and rational functions.
Section~\ref{sec:Roots} summarizes the approach used for developing
certified bounds on roots of polynomials.  
Section~\ref{sec:CertEval} combines
the certification of roots and evaluation bounds
on H\"older continuous functions yielding
our approach for computing certified evaluations.
Section~\ref{sec:Examples} presents
information regarding the implementation
in {\tt Maple} along with several examples
demonstrating its efficacy and efficiency.
Section~\ref{sec:Application} applies 
the techniques developed for certified evaluations
to prove non-negativity of coefficients arising
in a series expansion of a rational function.
The paper concludes in Section~\ref{sec:Conclusion}.

\section{H\"older continuous functions}\label{sec:Holder}

The following describes the collection of functions under consideration.

\begin{definition}
A function $f:\bC\rightarrow\bC$ is {\em locally H\"older continuous} at a point \mbox{$x^*\in\bC$}
if there exist positive real constants $\epsilon,C,\alpha$ such that
\begin{equation}\label{eq:HolderBound}
|f(x^*)-f(y)|\leq C\cdot|x^*-y|^\alpha \leq C\cdot \epsilon^\alpha
\end{equation}
for all $y\in B(x^*,\epsilon)$ where $B(x^*,\epsilon) = \{z\in\bC~:~|z-x^*|\leq \epsilon\}$.
In this case, $f(x)$ is said to have {\em H\"older constant} $C$
and {\em H\"older exponent}~$\alpha$ at $x^*$.
Moreover, if $\alpha=1$, then 
$f(x)$ is said to be {\em Lipschitz continuous}
at $x^*$ with {\em Lipschitz constant}~$C$.
\end{definition}

Functions which are locally H\"older continuous 
at a point are clearly continuous at that point and
the error bound provided in \eqref{eq:HolderBound} will be exploited in 
Section~\ref{sec:CertEval} to provide
certified evaluations.
Every function $f(x)$ which is continuously 
differentiable in a neighborhood of $x^*$ is locally H\"older continuous
with $\alpha = 1$, i.e., locally Lipschitz continuous.
For $n \geq 1$, $f(x) = \sqrt[n]{|x|}$ is continuous but not differentiable at $x^*=0$.
It is locally H\"older continuous at $x^*=0$ with H\"older constant $C = 1$ and H\"older exponent~\mbox{$\alpha = 1/n$}.

A computational challenge is to determine a H\"older constant $C$ and H\"older exponent $\alpha$ for $f(x)$ on $B(x^*,\epsilon)$ given 
$f(x)$, $x^*$, and $\epsilon > 0$.  
Sections~\ref{sec:HolderPoly} and~\ref{sec:HolderRat}
describe a strategy for polynomials and 
rational functions,~respectively. 

\subsection{Polynomials}\label{sec:HolderPoly}

Since every polynomial $f(x)$ is continuously differentiable, 
we can take the H\"older exponent to be $\alpha = 1$ at any point $x^*$.  
However, the H\"older constant $C$ depends upon $x^*$ and $\epsilon$.
The Fundamental Theorem of Calculus
shows that one just needs
\begin{equation}\label{eq:HolderDerivative}
C \geq \max_{y\in B(x^*,\epsilon)} |f'(y)|.
\end{equation}
Although one may attempt to compute this maximum, 
Taylor series expansion of $f'(x)$ at $x^*$ provides an easy to compute upper bound.  If $d = \deg f$, 
$$f'(x) = \sum_{i=1}^{d} \frac{f^{(i)}(x^*)}{(i-1)!} (x-x^*)^{i-1},$$
so that the triangle inequality yields
\begin{equation}\label{eq:ComputeMax}
C := \sum_{i=1}^d \frac{|f^{(i)}(x^*)|}{(i-1)!}\epsilon^{i-1}
\geq \max_{y\in B(x^*,\epsilon)} |f'(y)|.
\end{equation}

\subsection{Rational functions}\label{sec:HolderRat}

The added challenge with a rational function $f(x)$ is to ensure that it
is defined on $B(x^*,\epsilon)$.  One may attempt to compute the poles of $f(x)$
and ensure that none are in $B(x^*,\epsilon)$, the implementation in Section \ref{sec:Examples} 
is based on the following local approach
that also enables computing local upper bounds on $|f'(x)|$.
For $f(x) = a(x)/b(x)$, one can prove $b(y) \neq 0$ for all $y\in B(x^*,\epsilon)$
by showing that \mbox{$|b(x^*)| > |b(y)-b(x^*)|$} for all $y\in B(x^*,\epsilon)$.  
If $d_b = \deg b$, then
$$|b(y)-b(x^*)| = \left|\sum_{i=1}^{d_b} \frac{b^{(i)}(x^*)}{i!}(y-x^*)^i\right|
\leq \sum_{i=1}^{d_b} \frac{|b^{(i)}(x^*)|}{i!}\epsilon^i.$$
Therefore, a certificate that $f(x)$ is continuously differentiable on $B(x^*,\epsilon)$ is
$$|b(x^*)| > \sum_{i=1}^{d_b} \frac{|b^{(i)}(x^*)|}{i!}\epsilon^i$$
in which case 
\begin{equation}\label{eq:Minimum}
\min_{y\in B(x^*,\epsilon)}|b(y)| \geq |b(x^*)| - \sum_{i=1}^{d_b} \frac{|b^{(i)}(x^*)|}{i!}\epsilon^i > 0.
\end{equation}
When $b(x^*)\neq 0$, it is clear
that one can always take $\epsilon$ small
enough to satisfy~\eqref{eq:Minimum}.

When $f(x)$ is continuously differentiable on $B(x^*,\epsilon)$, then
one can take the H\"older exponent $\alpha = 1$ and 
the H\"older constant $C$ as in \eqref{eq:HolderDerivative}.
Hence,
$$\max_{y\in B(x^*,\epsilon)} |f'(x)| \leq
\dfrac{\displaystyle\max_{y\in B(x^*,\epsilon)} |a'(y)|}{\displaystyle\min_{y\in B(x^*,\epsilon)} |b(y)|}
+ \dfrac{\displaystyle\max_{y\in B(x^*,\epsilon)} |a(y)|\cdot
\displaystyle\max_{y\in B(x^*,\epsilon)} |b'(y)|}{\displaystyle\min_{y\in B(x^*,\epsilon)} |b(y)|^2}$$
where the maxima can be upper bounded similar to \eqref{eq:ComputeMax}
and the minimum can be lower bounded using \eqref{eq:Minimum}.

\section{Certification of roots}\label{sec:Roots}

The initial task of determining certified evaluation bounds
at roots of a given polynomial is to compute certified bounds
of the roots.
From a theoretical perspective, we assume that we know the
polynomial $p(x)$ exactly.  From a computational perspective,
we assume that $p(x)$ has rational coefficients, i.e., $p(x)\in\bQ[x]$.
The certification of roots of $p(x)$ can thus be performed using
{\tt RealRootIsolate} based on \cite{RealRoots1,RealRoots2,RealRoots4,RealRoots5,RealRoots3} 
in {\tt Maple} as follows.  

Since $p(x)$ is known exactly, we first reduce down to the irreducible case
with multiplicity $1$ roots by computing an irreducible factorization of $p(x)$, say
$$p(x) = p_1(x)^{r_1}\cdots p_s(x)^{r_s}$$
where $p_1,\dots,p_s$ are irreducible with corresponding
multiplicities $r_1,\dots,r_s\in\bN$.  
For $p(x)\in\bQ[x]$, {\tt factor} in \maple
computes the irreducible factors in $\bQ[x]$,
i.e., each $p_i(x)\in\bQ[x]$.
If $z\in\bC$ is a root of $p_j(x)$, then $z$ has multiplicity $1$ with respect to~$p_j(x)$, i.e.,~$z$ is a simple root of $p_j(x)$ with 
$p_j(z)=0$ and $p_j'(z)\neq 0$. In contrast, $z$ has multiplicity~$r_j$ with respect to~$p(x)$.

For each irreducible factor $q := p_j$, we transform the domain $\bC$ into $\bR^2$~via
\begin{equation}\label{eq:RealImagQ}
q(x+iy) = q_r(x,y) + i\cdot q_i(x,y) \,\,\,\,\,\,\,\,\,\,\,\,\,\,\,\,\,\,\,\, 
\hbox{where $i=\sqrt{-1}$}.
\end{equation}
Therefore, solving $q=0$ on $\bC$
corresponds with solving \mbox{$q_r=q_i=0$} on~$\bR^2$.  
Applying {\tt RealRootIsolate} with an optional 
absolute error bound {\tt abserr} that will be utilized later
guarantees as output isolating boxes 
for every real solution to \mbox{$q_r=q_i=0$} on~$\bR^2$.
Therefore, looping over the irreducible factors of $p$, one obtains certified
bounds for every root $z$ of $p(x)$ in $\bC$
of the form $a_1 \leq \real(z) \leq a_2$ and $b_1 \leq \imag(z) \leq b_2$
where $a_1,a_2,b_1,b_2\in\bQ$.

One can enhance this certification to provide additional information about the
roots.  For example, suppose one aims to classify the roots of $p(x)$
in~$\bC$ based on their modulus being less than $1$, equal to $1$, and 
greater than $1$.  The {\tt RealRootIsolate} command can be applied
to \mbox{$q_r(x,y)=q_i(x,y)=0$} for 
\mbox{$(x,y)\in\bR^2$}
together with $x^2+y^2-1<0$, $x^2+y^2-1=0$,
and $x^2+y^2-1>0$, respectively, to perform this classification.

\section{Certified evaluations}\label{sec:CertEval}

Combining information on H\"older continuous functions
from Section~\ref{sec:Holder} and certification of roots
of polynomials from Section~\ref{sec:Roots} yields the following approach
to develop certified evaluations.  
With input a polynomial $p(x)$, a H\"older continuous function $f(x)$
which is defined at each root of $p(x)$, 
and an error bound $\epsilon>0$, the goal is to develop an approach
that computes $f(z)$ within $\epsilon$ for each root $z$ of $p(x)$.
Since one may not be able to evaluate $f(x)$ exactly, we incorporate
an evaluation error of $\delta\in (0,\epsilon)$.  Typically,
$\delta$ can be decreased by utilizing higher precision computations.  For rational input,
rounding to produce finite decimal representations 
constitutes the only source of representation error in {\tt Maple}.  Our implementation utilizes enough digits
to have $\delta=\frac{\epsilon}{10}$.

The first step is to utilize Section~\ref{sec:Roots} to determine initial
certified bounds for each root of $p(x)$.  For an initial error bound on the roots, 
we start with $\gamma = \epsilon/2$ so that each root $z$ is approximated by
$x^*$ with $|z-x^*|<\gamma$.  Certified evaluations are then obtained root by
root since the H\"older constants are dependent upon local information near each root.
In particular, the next step is to compute a H\"older exponent $\alpha$
and H\"older constant $C$ that is valid on the ball $B(x^*,2\gamma)$.  
If this is not possible, e.g., if $f(x)$ cannot be certified to be defined
on $B(x^*,2\gamma)$, one can simply reduce $\gamma$, e.g., by replacing $\gamma$ by $\gamma/2$, 
and repeat the process using a newly computed certified approximation of $z$.  
Since $f(x)$ is defined at each root of $p(x)$, this loop must terminate.

The final step is to utilize local information
to compute a new approximation of root $z$ that will produce a certified 
evaluation within $\epsilon$.  Consider $\mu$ such that
$$0 < \mu\leq \min\left\{\gamma,\sqrt[\alpha]{\frac{\epsilon-\delta}{C}}\right\}$$
and $z^*$ an approximation of $z$ such that $z\in B(z^*,\mu)$.  
Since $|x^*-z^*|\leq 2\gamma$, we have $B(z^*,\mu)\subset B(x^*,2\gamma)$
so that all of the H\"older constants are valid on $B(z^*,\mu)$.  Hence,
all that remains is to compute a certified approximation of $f(z^*)$, say $f^*$,
within the evaluation error of $\delta$ since
$$|f^*-f(z)|\leq |f^*-f(z^*)|+|f(z^*)-f(z)| \leq \delta + C\cdot |z^*-z|^\alpha \leq
\delta + C\cdot \mu^\alpha \leq \epsilon.$$

\section{Implementation and examples}\label{sec:Examples}

The certified evaluation procedure has been implemented as a \maple \ package
entitled {\tt EvalCertification} available at \codehome \ along with \maple \  notebooks for the examples. 
The main export is the procedure {\tt EstimateRootsAndCertifyEvaluations}
with the following high level signature. \\
{\bf Input:} 
\begin{itemize}
    \item Univariate polynomial $p\in\mathbb{Q}[x]$.
    \item List of locally H\"older continuous functions $f_1,\dots,f_m$ with which to certifiably estimate evaluations at the roots of $p(x)$.
    \item List of procedures specifying how to compute local H\"older constants and exponents for $f_1,\dots,f_m$. (See Section~\ref{subsec:holder_extension} for example of the syntax).
    \item Desired accuracy $\epsilon\in\bQ_{>0}$.
\end{itemize}
{\bf Main output:}
\begin{itemize}
    \item Complex rational root approximations $z_1^*,\dots,z_s^*$, one for each of the distinct roots $z_1,\dots,z_s$ of $p(x)$, such that $\vert z_j - z_j^* \vert \leq \epsilon$.
    \item For each $f_i$ and $x_j$, a complex decimal number $f_{ij}^*$ with $\vert f_i(x_j) - f_{ij}^*\vert \leq \epsilon$. 
\end{itemize}
The {\tt EvalCertification} package is formatted in a .mpl file which can be 
read into a notebook with: 

\begin{verbatim}
    read("EvalCertification.mpl")
    with(EvalCertification)
\end{verbatim}
This lists the package's following four exports: the main function and three built in procedures for determining local H\"older constants and exponents for common classes of H\"older functions.

{\color{blue} \it \small
\begin{verbatim}
  EstimateRootsAndCertifyEvaluations, HolderInformationForExponential,
  HolderInformationForPolynomial,  HolderInformationForRationalPolynomial
\end{verbatim}
}

The following examples highlight the specific \maple \ types of these inputs and 
outputs as well as other interface details. 

\subsection{Critical values}\label{subsec:CriticalValues}

As a first example, consider \eqref{eq:SimpleOpt} by
certifiably evaluating 
\[
f(x) = 21x^8-42x^4-56x^3+3
\] 
at the roots of $p(x) = f'(x) = 168x^2(x^5-x-1)$.
\begin{verbatim}
    f_polynomial := 21x^8 - 42x^4 - 56x^3 + 3;
    f_derivative := diff(f_polynomial, x);
    EstimationPrecision := 1/10^14
\end{verbatim}
The main call to {\tt EstimateRootsAndCertifyEvaluations} is subsequently:
\begin{verbatim}
    solutions_information := 
    EstimateRootsAndCertifyEvaluations(f_derivative, 
                            [f_polynomial, f_derivative], 
                            HolderInformationForPolynomial, 
                            EstimationPrecision);
\end{verbatim}

The first argument provides the polynomial to solve and the second a list of polynomials to evaluate. 
For illustration, we include evaluating the polynomial to solve in the evaluation list.  
The third argument is a procedure for computing H\"older constants which, in this case,
uses the procedure that implements the estimates in Section~\ref{sec:HolderPoly}
for polynomials.  Notice that we need only provide the procedure once since all functions for evaluation fall into the same class of H\"older functions, namely polynomials.
The last argument is the final error bound.

The output {\tt solutions\_information} is formatted as a Record. Certifiably estimated roots are stored in a list as illustrated.
{\color{blue}\begin{verbatim}
    solutions_information:-root_values =
    [0, 
    2691619717901426047/2305843009213693952, 
    26745188167908553113/147573952589676412928 - 
    19995423894655642147*I/18446744073709551616, ...]
\end{verbatim}
} 
Evaluations are also stored in lists, one list for each function to evaluate with one entry for each root of $p$. Estimates are ordered so that the estimate at index~$i$ in its list corresponds to the root at index~$i$ in the roots list.
{\color{blue}\begin{verbatim}
    solutions_information:-evaluations_functions_1 =
    [3., -91.6600084778015707, ...];
    solutions_information:-evaluations_functions_2 =
    [0, -6.692143197043304*10^(-16),...];
\end{verbatim}
}
Therefore, the solution to \eqref{eq:SimpleOpt} is $-91.6600084778015707$
which is certifiably correct within an error of $10^{-14}$.

\subsection{Extending with custom H\"older information procedures} \label{subsec:holder_extension}

Polynomial and rational functions can utilize
the built-in procedures for computing local
H\"older constants.  
One more feature of {\tt EvalCertification} 
is the ability to extend the certification procedures
to new classes of functions by specifying how to 
compute local H\"older constants.  
As an example, consider $f(x)=\vert x\vert^\frac{1}{r}$ where $r > 1$. As noted in Section \ref{sec:Holder}, 
$f$ is globally H\"older continuous with 
H\"older constant $C=1$ and H\"older exponent $\alpha = \frac{1}{r}$. 
The following shows the necessary format
for defining a new procedure to compute
the local H\"older information.
\begin{verbatim}
    alpha := 1/r;
    ExponentialInfo := proc(InputFunction,point,radius)
        return(Record("exponent"=alpha,"constant"=1,
                      "domain_estimate"=false));
    end proc;
\end{verbatim}
All custom H\"older information procedures must follow the same signature as this example. The exponent $\alpha$ and constant $C$ in the output Record should satisfy the H\"older conditions for {\tt InputFunction} on the ball $B(\texttt{point},{\texttt{radius}})\subset\bC$. The domain estimate lists points within {\tt radius} of points missing from the input function's domain, or false if defined everywhere.

To illustrate, consider $f_1(x) = |x|^{1/3}$
and
$$f_2(x) = \frac{-6x^5-8x^4+18x^3+88x^2-54}{6x^5+30x^4+16x^3-18x^2-44x}.$$
The following commands 
produce certified evaluations of $f_1(x)$ and $f_2(x)$ 
at the roots of $p(x) = 21x^8 - 42x^4 - 56x^3 + 3$
from Section~\ref{subsec:CriticalValues}
using the above procedure {\tt ExponentialInfo} for $r=3$.
\begin{verbatim}
    f_polynomial := 21x^8 - 42x^4 - 56x^3 + 3;
    f1 := abs(x)^(1/3);
    f2:=(-6x^5-8x^4+18x^3+88x^2-54)/(6x^5+30x^4+16x^3-18x^2-44x);
    EstimationPrecision := 1/10^14;
    solutions_information := 
    EstimateRootsAndCertifyEvaluations(
        f_polynomial,
        [f1,f2],
        [ExponentialInfo,HolderInformationForRationalPolynomial],
        EstimationPrecision);
\end{verbatim}
The third argument is a list instead of single procedure. This is necessary when evaluating functions that require different approaches for computing H\"older information. The built-in procedure {\tt HolderInformationForExponential} 
takes as input a single number $\alpha$ and outputs the procedure {\tt ExponentialInfo} for that $\alpha$. We could equivalently replace {\tt ExponentialInfo} with the procedure {\tt HolderInformationForExponential(1/3)} in this example.

\subsection{Benchmarking}\label{subsec:benchmarking}

The dominant computational cost is in estimating roots and computing local H\"older constants.
Suppose that $R(p,\epsilon)$ is
the complexity of approximating roots of $p$ within $\epsilon$, 
$H(f_1,\dots,f_n,p,\epsilon)$ is the minimum complexity of
computing H\"older constants at one root, and $A(p,f_1,\dots,f_n,\epsilon)$ is the number of repetitions required to find an accuracy $\gamma \leq \epsilon$ where local H\"older constants can be calculated. Then \[A(p,f_1,\dots,f_n,\epsilon)(R(p,\epsilon) + n\deg(p)H(f_1,\dots,f_n,p,\epsilon))+R(p,\epsilon)\] 
is a lower bound on the complexity.
The number of repetitions $A$ depends on the input in a complicated way which we do not attempt to characterize here. 

We benchmarked {\tt EvalCertification}
using random polynomials generated by 
the command {\tt randpoly} in \maple.
All tests computed roots of a random polynomial $p(x)$ 
with integer coefficients between $-10^{10}$ and $10^{10}$
and evaluated rational functions where the numerator
and denominator were polynomials of degree $D$.
The average was taken over 50 random selections.
Figure~\ref{fig:benchmark} shows the results of the benchmarking tests, which were performed
on Ubuntu 18.04 running \maple \ 2020 
with an Intel Core i7-8565U processor.
They were based on the degree~$d$ of $p(x)$, the value of $D$,
the number of functions $n$ to evaluate, 
and the size of the output error $\epsilon$.

\begin{figure}[!t]
    \centering
\resizebox{\textwidth}{!}{
    \begin{tabular}{cccc}
    \includegraphics[scale=1]{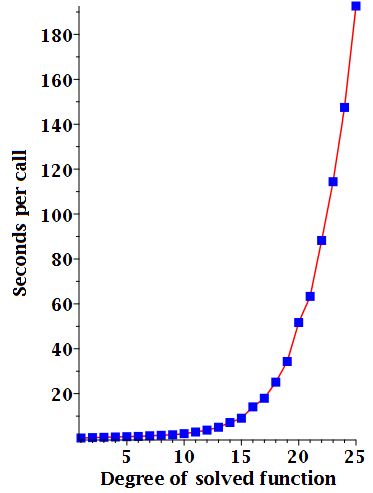} &
    \includegraphics[scale=1]{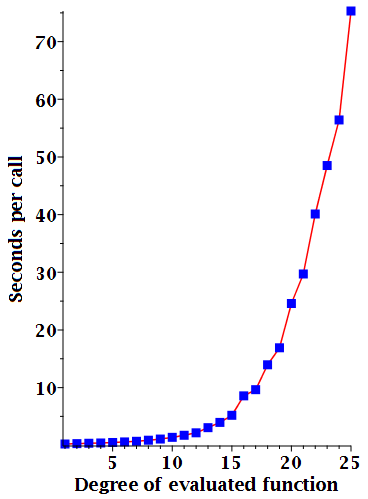} &
    \includegraphics[scale=1]{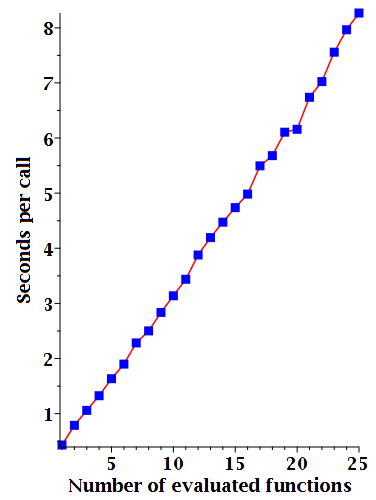} &
    \includegraphics[scale=1]{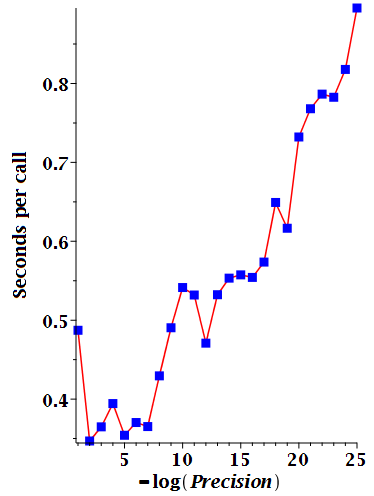}
    \\
    \,\,\,\,\,\,\,\,\,\,\,\,\,(a) & \,\,\,\,\,\,\,\,\,\,\,\,\,(b) & \,\,\,\,\,\,\,\,\,\,\,\,\,(c) & \,\,\,\,\,\,\,\,\,\,\,\,\,(d) \\
    \end{tabular}
    }
    \caption{Results of tests with
    (a) $d\in\{1,\dots,25\}$, $D=5$, \mbox{$n=1$}, and \mbox{$\epsilon=10^{-14}$}; 
    (b)~\mbox{$d=5$}, $D\in\{1,\dots,25\}$, $n=1$, and $\epsilon=10^{-14}$; 
    (c) \mbox{$d=5$}, $D = 5$, \mbox{$n\in\{1,\dots,25\}$}, and \mbox{$\epsilon=10^{-14}$}; 
    (d) $d=5$, $D=5$, $n=1$, and \mbox{$\epsilon\in\{1,10^{-1},\dots,10^{-25}\}$}.}\label{fig:benchmark}
\end{figure}

\section{Application to prove non-negativity}\label{sec:Application}

One application of our approach to certified evaluations
is to certifiably 
decide whether or not 
all coefficients of the Taylor series expansion 
centered at the origin are non-negative
for a given real rational function $r(x)$. 
We focus on non-negativity since non-positivity is equivalent to non-negativity for $-r(x)$
and alternating in sign is equivalent to non-negativity for $r(-x)$.
The following method uses certified evaluations
to obtain information about the 
coefficients in the tail of the Taylor series expansion
reducing the problem to only needing 
to inspect finitely many coefficients.
This approach assumes
that the function does not have a pole at the origin, 
its denominator has only simple roots,
and its denominator has a real positive root that is strictly
smallest in modulus amongst all its roots.
This approach can be extended to more general settings,
but will not considered~here due to space considerations.

We will make use of the following standard theorem.

\begin{theorem} \label{thm:coefficient-formula}
Let $p(x),q(x)\in\bR[x]$ 
such that $p(x)$ and $q(x)$ have no common root, 
$q(0) \neq 0$ and $\deg(p(x)) < \deg(q(x))=d$. 
If $q(x)$ has only simple roots say $\alpha_1,\dots,\alpha_d\in\bC$,
then $r(x) = p(x)/q(x)$ has a Taylor series 
expansion of the form $r(x) = \sum_{n=0}^\infty r_n x^n$ converging for all $x \in \bC$ with $|x| < \min\{|\alpha_1|, \ldots, |\alpha_d|\}$. Furthermore, for all $n \geq 0$, \begin{equation}\label{eq:coefficient-formula} 
r_n = - \sum_{i=1}^d  \frac{p(\alpha_i)}{\alpha_i q'(\alpha_i)} \alpha_i^{-n}.\end{equation}
\end{theorem}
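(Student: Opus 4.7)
The plan is to proceed by partial fraction decomposition, which is the standard tool for this kind of rational function expansion. Since $\deg p < \deg q$, there is no polynomial part, and since the roots $\alpha_1,\dots,\alpha_d$ of $q$ are simple, I can write
$$r(x) = \frac{p(x)}{q(x)} = \sum_{i=1}^d \frac{A_i}{x - \alpha_i}$$
for uniquely determined constants $A_i \in \mathbb{C}$. The standard residue computation then identifies $A_i = p(\alpha_i)/q'(\alpha_i)$: multiply both sides by $(x-\alpha_i)$, take $x \to \alpha_i$, and use that $q'(\alpha_i) = \prod_{j\neq i}(\alpha_i - \alpha_j) \neq 0$ because $\alpha_i$ is a simple root not shared with $p$.

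Next I would expand each simple pole as a geometric series around the origin. For $|x| < |\alpha_i|$, writing
$$\frac{1}{x-\alpha_i} = -\frac{1}{\alpha_i}\cdot\frac{1}{1-x/\alpha_i} = -\sum_{n=0}^\infty \alpha_i^{-n-1} x^n.$$
Summing contributions from all $i$ gives
$$r(x) = -\sum_{i=1}^d \frac{p(\alpha_i)}{q'(\alpha_i)}\sum_{n=0}^\infty \alpha_i^{-n-1} x^n
= -\sum_{n=0}^\infty\left(\sum_{i=1}^d \frac{p(\alpha_i)}{\alpha_i\, q'(\alpha_i)}\alpha_i^{-n}\right)x^n,$$
which matches the claimed formula \eqref{eq:coefficient-formula} after identifying the coefficient of $x^n$.

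The only nontrivial technical points are the convergence statement and the interchange of the two sums. Both geometric series expansions are absolutely convergent precisely when $|x| < |\alpha_i|$, so the common region of convergence is $|x| < \min_i |\alpha_i|$; on any closed subdisc of this region Weierstrass $M$-test gives uniform absolute convergence of each inner series, and the outer sum is finite, so Fubini (or simply rearrangement of absolutely convergent series) justifies swapping the summations. The hypothesis $q(0)\neq 0$ is implicit in $|\alpha_i|>0$ for all $i$, which is what makes the radius of convergence strictly positive and the factor $1/\alpha_i$ in the formula legitimate. The hardest step, if one can call anything here hard, is carefully justifying the interchange of sums, but this is entirely standard within the disc of convergence and requires no delicate estimate beyond the geometric bound $|\alpha_i^{-n-1} x^n| \leq |\alpha_i|^{-1}(|x|/|\alpha_i|)^n$.
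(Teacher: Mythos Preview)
Your proof is correct and follows essentially the same route as the paper: partial fraction decomposition of $p/q$ into simple poles with residues $p(\alpha_i)/q'(\alpha_i)$, followed by geometric series expansion of each term $1/(x-\alpha_i)$ about the origin. The only cosmetic differences are that the paper derives the partial fraction expansion from scratch via a Lagrange-interpolation basis $p_i(x)=q(x)/(x-\alpha_i)$ rather than citing it, and that your formula $q'(\alpha_i)=\prod_{j\neq i}(\alpha_i-\alpha_j)$ is missing the leading coefficient of $q$ (harmless here, since you only use it to conclude $q'(\alpha_i)\neq 0$).
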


Theorem~\ref{thm:coefficient-formula} follows
from partial fraction decomposition of rational functions or using linear recurrences.
For completeness, we provide a proof in the Appendix. Using Theorem \ref{thm:coefficient-formula}, we obtain the following result on the eventual behavior of the coefficients of the Taylor series of certain rational functions.

\begin{theorem} \label{thm:eventual-behavior}
With the setup from Theorem~\ref{thm:coefficient-formula},
define $C_i = -p(\alpha_i)/(\alpha_i q'(\alpha_i))$ for 
$i=1,\dots,d$.
If $\alpha_1\in\bR$ is such that 
$|\alpha_1| < \min\{|\alpha_2|,\ldots,|\alpha_d|\}$, 
then there exists~$N$ 
after which exactly one of the following conditions 
on $r_n$ holds:
\begin{enumerate}
    \item If $\alpha_1 >0$ and $C_1 > 0$, then $r_n > 0$ for all $n > N$.
    \item If $\alpha_1 > 0$ and $C_1 < 0$, then $r_n < 0$ for all $n > N$. 
    \item If $\alpha_1 < 0$, then $r_n$ is alternating in sign for all $n > N$, i.e., \mbox{$(-1)^n\cdot r_n > 0$}
    for all $n>N$ or $(-1)^n\cdot r_n < 0$ for all $n > N$. 
\end{enumerate}
Moreover, one may take
$N = \log(K)/\log(M/m)$ 
where 
$K = \sum_{i=2}^d |C_i|/|C_1|$,
$m = |\alpha_1|$, and 
$M = \min\{|\alpha_2|, \ldots, |\alpha_d|\}$.
\end{theorem}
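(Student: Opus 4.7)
\medskip

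\noindent\textbf{Proof proposal.} The plan is to use the explicit coefficient formula from Theorem~\ref{thm:coefficient-formula} to isolate the dominant term coming from $\alpha_1$ and show that the remaining tail is strictly smaller in magnitude once $n$ exceeds $N$. Rewriting \eqref{eq:coefficient-formula} as $r_n = C_1\alpha_1^{-n} + \sum_{i=2}^d C_i\alpha_i^{-n}$, I would first observe that $r_n$ is real (since $p,q\in\bR[x]$) and $C_1\alpha_1^{-n}$ is real (since $\alpha_1\in\bR$ and hence $C_1\in\bR$), while $C_1\neq 0$ because $\alpha_1$ is a simple root of $q$ with $p(\alpha_1)\neq 0$ and $\alpha_1 q'(\alpha_1)\neq 0$.

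Next I would apply the triangle inequality to the tail:
\[
\left|\sum_{i=2}^d C_i\alpha_i^{-n}\right|\;\leq\;\sum_{i=2}^d |C_i|\cdot |\alpha_i|^{-n}\;\leq\;\Big(\sum_{i=2}^d|C_i|\Big)\cdot M^{-n}
\;=\;|C_1|\cdot K\cdot M^{-n}.
\]
Comparing this to $|C_1\alpha_1^{-n}|=|C_1|\cdot m^{-n}$, the tail is strictly smaller than the leading term exactly when $K\cdot M^{-n}<m^{-n}$, i.e.\ when $(M/m)^n>K$; taking logarithms (both sides positive since $M>m>0$) gives the stated threshold $n>N=\log(K)/\log(M/m)$. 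In the degenerate case $K=0$ the inequality is automatic for every~$n$.

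Once this strict domination is in hand, the sign of $r_n$ for $n>N$ is forced to coincide with the sign of the real number $C_1\alpha_1^{-n}$, since the real perturbation added to it has absolute value smaller than $|C_1\alpha_1^{-n}|$. I would then finish with the three-way case split: when $\alpha_1>0$ the quantity $\alpha_1^{-n}>0$, so $\mathrm{sgn}(r_n)=\mathrm{sgn}(C_1)$, giving cases (1) and (2); when $\alpha_1<0$, $\alpha_1^{-n}=(-1)^n|\alpha_1|^{-n}$, so $\mathrm{sgn}(r_n)=(-1)^n\mathrm{sgn}(C_1)$, giving case (3) with the sign of $(-1)^n r_n$ determined by $\mathrm{sgn}(C_1)$.

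The proof is essentially a dominant-term/geometric-decay argument, so no step is a serious obstacle; the only mild subtlety is being careful that $r_n$ and $C_1\alpha_1^{-n}$ are both real (so that ``smaller in modulus than'' actually implies ``same sign''), and tracking that $C_1\neq 0$ and $M>m$ so the logarithm defining $N$ makes sense.
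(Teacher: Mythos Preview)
Your proposal is correct and follows essentially the same approach as the paper: isolate the dominant term $C_1\alpha_1^{-n}$, bound the remaining sum by $|C_1|K(m/M)^n$ via the triangle inequality, and read off the threshold $N=\log(K)/\log(M/m)$ at which strict domination begins. Your write-up is in fact slightly more careful than the paper's, explicitly noting why $C_1\neq 0$, handling the degenerate case $K=0$, and treating the $\alpha_1<0$ case in full rather than leaving it implicit.
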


A proof of Theorem~\ref{thm:eventual-behavior}
is provided in the Appendix. Theorems~\ref{thm:coefficient-formula}
and \ref{thm:eventual-behavior} yield the following.

\begin{corollary} \label{cor:eventual-behavior}
Suppose that $f(x),q(x)\in\bR[x]$
have no common root, $q(0)\neq0$, and
$q(x)$ has only simple roots, namely $\alpha_1,\dots,\alpha_d\in\bC$,
such that $\alpha_1\in\bR$ and $|\alpha_1| < \min\{|\alpha_2|, \ldots, |\alpha_d|\}$.  
Let $g(x),p(x)\in\bR[x]$ be the unique polynomials
such that $f(x) = q(x)\cdot g(x) + p(x)$ with
$\deg(p(x)) < \deg(q(x))$.  
Define $C_i = -p(\alpha_i)/(\alpha_i q'(\alpha_i))$ for 
$i =1,\dots,d$.
Then, $f(x)/q(x)$ has a Taylor series expansion 
$f(x)/q(x) = \sum_{n=0}^\infty R_n x^n$ converging for all $x \in \bC$ with $|x| < \min\{|\alpha_1|, \ldots, |\alpha_d|\}$ and there is a threshold $N_0$ so that exactly one of the following conditions on $R_n$ holds:
\begin{enumerate}
    \item If $\alpha_1 >0$ and $C_1 > 0$, then $R_n > 0$ for all $n > N_0$.
    \item If $\alpha_1 > 0$ and $C_1 < 0$, then $R_n < 0$ for all $n > N_0$. 
    \item If $\alpha_1 < 0$, then $R_n$ is alternating in sign for all $n > N_0$, i.e. $(-1)^n\cdot R_n > 0$ for all $n > N_0$ or $(-1)^n\cdot R_n < 0$ for all $n > N_0$. 
\end{enumerate}
One can take
$N_0 = \max\{\deg(f(x))-\deg(q(x))+1,\log(K)/\log(M/m)\}$ 
where 
$K = \sum_{i=2}^d |C_i|/|C_1|$,
$m = |\alpha_1|$, and 
$M = \min\{|\alpha_2|, \ldots, |\alpha_d|\}$.
\end{corollary}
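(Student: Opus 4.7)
The plan is to reduce Corollary~\ref{cor:eventual-behavior} to Theorems~\ref{thm:coefficient-formula} and~\ref{thm:eventual-behavior} via polynomial division. Given $f(x) = q(x)g(x) + p(x)$ with $\deg p < \deg q$, write
$$\frac{f(x)}{q(x)} = g(x) + \frac{p(x)}{q(x)}.$$
Since $g(x)$ is a polynomial of degree $\deg f - \deg q$, the Taylor coefficients $R_n$ of $f/q$ agree with the Taylor coefficients $r_n$ of $p/q$ once $n > \deg f - \deg q$, so it suffices to analyze the $r_n$ for large $n$.

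First I would verify that $p$ and $q$ share no common root, the one hypothesis of Theorem~\ref{thm:coefficient-formula} not immediately inherited from the corollary's hypotheses: if $q(\alpha) = 0 = p(\alpha)$, then $f(\alpha) = q(\alpha)g(\alpha) + p(\alpha) = 0$, contradicting the assumption that $f$ and $q$ have no common root. Theorem~\ref{thm:coefficient-formula} then applies to $p/q$, yielding $r_n = \sum_{i=1}^d C_i \alpha_i^{-n}$ with $C_i = -p(\alpha_i)/(\alpha_i q'(\alpha_i))$---precisely the constants defined in the corollary---and convergence on the disk $|x| < \min_i |\alpha_i|$. Each $C_i$ is nonzero since $p(\alpha_i)\neq 0$, $\alpha_i \neq 0$ (because $q(0)\neq 0$), and $q'(\alpha_i)\neq 0$ (simple root); in particular $C_1\neq 0$, so the hypotheses of Theorem~\ref{thm:eventual-behavior} are met. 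Adding the polynomial $g$ preserves the radius of convergence, so $f/q$ has the claimed Taylor expansion on the same disk.

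Finally, I would invoke Theorem~\ref{thm:eventual-behavior} applied to $p/q$, which supplies a threshold $N = \log(K)/\log(M/m)$ beyond which $r_n$ satisfies the sign trichotomy governed by $\sgn(\alpha_1)$ and $\sgn(C_1)$. Setting $N_0 = \max\{\deg f - \deg q + 1,\, N\}$ guarantees that for every $n > N_0$ one has both $R_n = r_n$ and the sign conclusion of Theorem~\ref{thm:eventual-behavior}, giving the bound asserted in the corollary. I expect no substantive obstacle here: once polynomial division strips off the polynomial part, the argument is essentially bookkeeping, and the only step requiring a brief check is that ``$p$ and $q$ share no common root'' survives the reduction, as noted above.
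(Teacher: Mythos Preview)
Your proposal is correct and follows essentially the same route as the paper's proof: write $f/q = g + p/q$, observe that $R_n = r_n$ once $n$ exceeds $\deg g = \deg f - \deg q$, and then invoke Theorems~\ref{thm:coefficient-formula} and~\ref{thm:eventual-behavior} for $p/q$. You are in fact more careful than the paper in explicitly verifying that $p$ and $q$ inherit the no-common-root hypothesis and that $C_1\neq 0$, both of which the paper's proof leaves implicit.
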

\begin{proof}
Since $f(x)/q(x) = g(x) + p(x)/q(x)$,
applying Theorem~\ref{thm:coefficient-formula} 
yields the first part.
Since the Taylor series coefficients 
of $f(x)/q(x)$ and $p(x)/q(x)$ are same
for $n>\deg(f(x))-\deg(g(x))$, the
second part immediately follows
from Theorem~\ref{thm:eventual-behavior}.
\end{proof}

One key to utilizing Theorem~\ref{thm:eventual-behavior}
and Corollary~\ref{cor:eventual-behavior} is
to certify that $q(x)$ satisfies the requisite
assumptions.
Validating that $q(x)$ has only 
simple roots follows
from computing an irreducible
factorization as in Section~\ref{sec:Roots} 
and checking if every factor has multiplicity $1$.  
Section~\ref{sec:Classification}
describes a certified approach to verify
the remaining conditions on $q(x)$.
Section~\ref{sec:CertNonNeg} yields a complete
algorithm for certifiably deciding non-negativity
of all Taylor series coefficients
which is demonstrated on two examples.

\subsection{Classification of roots}\label{sec:Classification}

Given a polynomial $q(x)\in\bR[x]$ with only simple roots and $q(0)\neq0$, 
the following describes a method to certifiably determine if $q(x)$ has a positive root
that is strictly smallest in modulus amongst all its roots.  
This method uses the ability to certifiably approximate all real points
in zero-dimensional semi-algebraic sets.  Computationally,
this can be accomplished using the command 
{\tt RealRootIsolate} in \maple.

The first step is to certifiably determine if $q(x)$ has a positive root via
$${\cal P} = \{p\in\bR~:~q(p) = 0, p>0\}.$$
If ${\cal P}=\emptyset$, then one returns that $q(x)$ does not have a positive root.
Otherwise, one proceeds to test the modulus condition for $\alpha_1 = \min {\cal P}$.

The modulus condition needs to be tested against negative roots and non-real roots. 
For negative roots, consider
$${\cal N} = \{n\in\bR~:~q(-n) = 0, n>0\}
\,\,\,\,\hbox{and}\,\,\,\,
{\cal B} = \{b\in\bR~:~q(b)=q(-b)=0,b>0\}.$$
By using certified approximations of $\alpha_1$ and points in ${\cal N}$ and ${\cal B}$
of decreasing error, one can certifiably determine which of the following holds:
$\alpha_1 < \min {\cal N}$, $\alpha_1 > \min {\cal N}$, or $\alpha_1\in {\cal B}\subset{\cal P}$.
If $\alpha_1 > \min {\cal N}$ or $\alpha_1\in {\cal B}$, then
one returns that~$q(x)$ does has not a positive root
that is strictly smallest in modulus amongst all its roots.  
Otherwise, one proceeds to the non-real roots by considering
$$
\begin{array}{l}
{\cal L} = \{(r,a,b)\in\bR^3~:~q(r) = 0, r>0, q(a+ib)=0, b>0, a^2+b^2<r^2\} \,\,\,\hbox{and} \\
{\cal E} = \{(r,a,b)\in\bR^3~:~q(r) = 0, r>0, q(a+ib)=0, b>0, a^2+b^2=r^2\}.\end{array}$$
Note that $q(a+ib)=0$ provides two real polynomial conditions on $(a,b)\in\bR^2$ 
via the real and imaginary parts as in \eqref{eq:RealImagQ} so that
${\cal L}$ and ${\cal E}$ are clearly zero-dimensional semi-algebraic sets.
Moreover, for the projection map $\pi_1(r,a,b) = r$, 
$\pi_1({\cal L}\cup{\cal E})\subset {\cal P}$.
By using certified approximations of $\alpha_1$ and points in ${\cal L}$ and ${\cal E}$
of decreasing error, one can certifiably determine 
if $\alpha_1\in \pi_1({\cal L}\cup{\cal E})$ 
or $\alpha_1\notin \pi_1({\cal L}\cup{\cal E})$.
If the former holds, then one returns that~$q(x)$ does has not a positive root
that is strictly smallest in modulus amongst all its roots.  
If the later holds, then one returns that~$q(x)$ does indeed have a positive root
that is strictly smallest in modulus amongst all its roots.  

\subsection{Certification of non-negativity}\label{sec:CertNonNeg}

Suppose that $f(x),q(x)\in\bR[x]$ which satisfy the assumptions
in Corollary~\ref{cor:eventual-behavior}.  The following describes
a method to certifiably determine if all of the 
coefficients $R_n$ of the Taylor series expansion for $f(x)/q(x)$
centered at the origin are non-negative or 
provides an integer $n_0$ such that $R_{n_0}<0$.

First, the Euclidean algorithm is utilized to determine $g(x),p(x)\in\bR[x]$
with $\deg(p(x))<\deg(q(x))$ such that $f(x) = q(x)\cdot g(x)+p(x)$.
Define $h(x)=x\cdot q'(x)$ and $C(x) = -p(x)/h(x)$.
Hence, $d = \deg(q(x)) = \deg(h(x))$ such that $q(x)$ and~$h(x)$ have no common roots.
As in Corollary~\ref{cor:eventual-behavior}, let $\alpha_1,\dots,\alpha_d$
be the roots of $q(x)$ with $\alpha_1\in\bR_{>0}$ such
that $\alpha_1 < \min\{\alpha_2,\dots,\alpha_d\}$.
Let $\beta_1,\dots,\beta_d\in\bC$ (not necessarily all distinct) be the roots of $h(x)$.

Certified evaluations at the roots of $q(x)$ and $h(x)$ with error bound
$\epsilon_k = 2^{-k}$ for $k=1,2,\dots$ can be used 
until the following termination conditions are met:
\begin{enumerate}
    \item\label{S1} $\alpha_i^*$ and $\beta_j^*$ are such that $\alpha_1^*\in\bR$, $|\alpha_i^*-\alpha_i|<\epsilon_k$, and $|\beta_j^*-\beta_j|<\epsilon_k$
    \item\label{S2} the set $\{0, \alpha_1^*, \ldots, \alpha_d^*\}$ is $2\cdot\epsilon_k$ separated, i.e., 
    $|s-t|^2 \geq (2 \epsilon_k)^2$ for all distinct $s,t$ in this set,
    \item\label{S3} $\gamma^* \leq \min\{|\alpha_i^* - \beta_j^*|~:~1 \leq i,j \leq d\}$
    such that $\gamma^*>2\cdot\epsilon_k + \epsilon_k^{1/(4d)}$,
    \item\label{S4} for $m^* = \alpha_1^* + \epsilon_k$ and $M^* \leq \min\{|\alpha_2^*|,\dots,|\alpha_d^*|\}-\epsilon_k$,
    one has $m^* < M^*$,
    \item\label{S5} $C_i^*$ such that $|C_i^*-C(\alpha_i)|<\epsilon_k$,
    \item\label{S6} $L_i^*$ such that $L_i^*\geq |c_d|^{-2}\sum_{\ell=0}^{2d-1}|u^{(\ell)}(\alpha_i^*)|\epsilon_k^\ell/\ell!$
    where $c_d$ is the leading coefficient of $q(x)$ and $u(x) = -p'(x)h(x)+p(x)h'(x)$, and
    \item\label{S7} either (a) $C_1^*+L_1^* \sqrt{\epsilon_k} < 0$ or (b) $C_1^*+L_1^*\sqrt{\epsilon_k} > 0$.
\end{enumerate}
Before proving that such a termination condition can be met,
we describe the last steps which are justified
by Corollary~\ref{cor:eventual-behavior}.  Let $\epsilon=\epsilon_k$ be the value where the termination conditions
are met and calculate $0 < b^* \leq \min\{|C_1^* \pm L_1^* \sqrt{\epsilon}|\}$,
\[
\begin{array}{c}
K^* \geq (1/b^*) \sum_{i=2}^d (|C_i^*| + L_i^* \sqrt{\epsilon}), \,\,\, A^* \geq \log(K^*)/\log(M^*/m^*),\\[0.1in]
N_0^* = \max\{\deg(f(x))-\deg(q(x)), \lceil A^* \rceil\}.
\end{array}
\] 
The inequalities in Items~\ref{S3},~\ref{S4}, and~\ref{S6} 
and these values above are meant to signify 
the rounding direction in machine precision used to compute the values.
With this, if $C_1^*+L_1^* \sqrt{\epsilon_k} < 0$, then return $n_0 = N_0^*+1$ 
in which $R_{n_0} < 0$.
Otherwise, the non-negativity of all $R_n$ is equivalent to the non-negativity of
$R_0,\dots,R_{N_0^*}$ which can be computed by explicit computation.
If there exists $n_0\in\{0,\dots,N_0^*\}$ such that $R_{n_0}<0$, return $n_0$.
Otherwise, return that all $R_n$ are non-negative.

\medskip

Returning to the proof of correctness, note that Items~\ref{S1},~\ref{S5}, and~\ref{S6} follow
from evaluation errors and 
$$\delta = \min\{|\alpha_i-\alpha_j|, |\alpha_i|~:~ 1\leq i < j \leq d\} > 0.$$  
By Item~\ref{S1}, $|\alpha_i^* - \alpha_j^*| \geq |\alpha_i - \alpha_j| - 2 \epsilon_k$ 
and $|\alpha_i^*| \geq |\alpha_i| - \epsilon_k$.  Thus, 
$$\delta^* = \min\{|\alpha_i^*-\alpha_j^*|, |\alpha_i^*|~:~1\leq i < j \leq d\} \geq \delta - 2 \epsilon_k - \eta$$ 
where $\eta > 0$ is the machine precision on the lower bounds on the quantities in $\delta^*$. 
Since $\epsilon_k \to 0$ and $\eta$ can be made arbitrarily small, eventually $\delta^* - 2 \epsilon_k -\eta > 2 \epsilon_k$ and Item~\ref{S2} will be met.

Since $q(x)$ and $h(x)$ have no roots in common, consider 
$$\gamma = \min\{|\alpha_i - \beta_j|~:~1 \leq i,j \leq d\} > 0.$$ 
We have $\gamma^* \geq \gamma - 2 \epsilon_k - \eta$ where $\eta > 0$ is the machine precision on the lower bounds of the quantities in $\gamma^*$. Eventually, $\gamma^* \geq \gamma - 2 \epsilon_k - \eta > 2 \epsilon_k + \nu + \epsilon_k^{1/(4d)}$ where $\nu > 0$ is the machine precision on the upper bound for~$\epsilon_k^{1/(4d)}$ and Item~\ref{S3} will be met.

Since $\Delta = M - m > 0$,
we have $M^* > M - \epsilon_k - \eta$ where $\eta > 0$ is the machine precision on the lower estimates in $M^*$ and also $m^* = \alpha_1^* + \epsilon_k < \alpha_1 + 2 \epsilon_k$.  
Thus, $M^* - m^* \geq \Delta - 3 \epsilon_k - \eta$ so that eventually Item~\ref{S4} will be met.

Since $C(x) = p(x)/h(x)$, we have $C'(x) = u(x)/h^2(x)$.
Assuming the previous items have all been met, we have
$$|\beta_j - \alpha_i^*| \geq |\beta_j^* - \alpha_i^*| - |\beta_j - \beta_J|^* > 2 \epsilon_k + \epsilon_k^{1/(4d)} - \epsilon_k > \epsilon_k.$$  
Hence, $h(x)$ has no roots in $B(\alpha_i^*, \epsilon_k)$ and $C'(x)$ 
is continuous on $B(\alpha_i^*,\epsilon_k)$.
Fix $z\in B(\alpha_i^*,\epsilon_k)$ and 
let $\zeta$ be the straight line segment contour from 
$\alpha_i^*$ to $z$ in $B(\alpha_i^*,\epsilon_k)$.  
Since $C'(x)$ exists on $B(\alpha_i^*,\epsilon_k)$, 
$C(z) - C(\alpha_i^*) = \int_{\zeta} C'(x) dx$.  
Thus, $|C(z) - C(\alpha_i^*)| \leq P\cdot |z - \alpha_i^*|$ where $P = \max \{ |C'(x)|~:~x \in B(\alpha_i^*,\epsilon_k)\}$. 
Therefore, we have $P \leq P_1/P^2_2$ where $P_1 = \max \{|u(x)|~:~x \in B(\alpha_i^*,\epsilon_k)\}$ and $P_2 = \min \{|h(x)|~:~x \in B(\alpha_i^*,\epsilon_k)\}$.  
Since $u(z) = \sum_{\ell=0}^{2d-1} u^{(\ell)}(\alpha_i^*) (z - \alpha_i^*)^\ell/\ell!$, 
we have $P_1 \leq \sum_{\ell=0}^{2d-1} |g^{(\ell)}(\alpha_i^*)| \epsilon_k^\ell/\ell! ( 1 + \eta)$ where $\eta > 0$ is the machine precision that results from the upper bound on the quantities $|u^{(\ell)}(\alpha_i^*)|$.  
Since 
$$h(z) = c_d \prod_{j=1}^d(z - \beta_j)\,\,\,\hbox{and}\,\,\,
|z - \beta_j| \geq |\alpha_i^* - \beta_j^*| - |z-\alpha_i^*| - |\beta_j^* -\beta_j| > \epsilon_k^{1/(4d)},$$
we have $|h(z)| \geq |c_d| \epsilon^{1/4}$.  Thus, $P^2_2 \geq |c_d|^2 \epsilon^{1/2}$ and $|C(z) - C(\alpha_i^*)| \leq L_i^* \sqrt{\epsilon}$. Since $B(\alpha_i^*,\epsilon_k) \subset B(\alpha_i,2 \epsilon_k) \subset B(\alpha_i,1)$, all estimates $|u^{(\ell)}(\alpha_i^*)|$ will be bounded above by the corresponding maximum values of $|u^{(\ell)}(z)|$ for $z\in B(\alpha_1,1)$.  Thus, even though $P_1$ varies in each step $k$, $P_1$ and $L_1^*$ will be uniformly bounded above for all $k$.  By the proof of Theorem \ref{thm:eventual-behavior}, $C_1 \neq 0$ so Item~\ref{S7}(a) will eventually be met if $C_1 < 0$
and Item~\ref{S7}(b) will eventually be met if $C_1 > 0$.

\medskip

After having proving termination, we note 
that in order to get a value of $N_0^*$ which is reasonable close to the
value of $N_0$ in Corollary~\ref{cor:eventual-behavior},
one may continue to decrease $\epsilon_k$ past the point where all termination conditions are first met.
The reason for this is to separate $m^*$ and $M^*$ as far all possible, i.e., to match the actual gap between $m$ and $M$ as closely as possible. 
This could be wise especially when $m^*$ is very close to $M^*$
in which case $\log(M^*/m^*)$ will be very close to $0$ so that $N_0^*$ will be very large.

\begin{example}\label{ex:NonNegEx}

To demonstrate the approach, consider the rational functions
$$(1-x^3-x^7+x^{18})^{-1} \,\,\,\,\,\,\,\,\,\,\,\,\,\,\,\,\,\,
\hbox{and} \,\,\,\,\,\,\,\,\,\,\,\,\,\,\,\,\,\,
(1-x^3-x^7+x^{21})^{-1}.
$$
The implementation of this approach in \maple \  
certifies that both rational functions 
have Taylor series expansions 
centered at the origin
where all of the coefficients are non-negative.
The value of $N_0$ 
from Corollary~\ref{cor:eventual-behavior}
which could be certified 
by the method described above was $N_0^*=204$ and $N_0^*=55$, respectively.
Thus, it was easy to utilize {\tt series} in \maple \
to check the non-negativity of the Taylor series 
coefficients up to~$N_0^*$ combined with Corollary~\ref{cor:eventual-behavior}
for the~tail.
\end{example}

\section{Conclusion}\label{sec:Conclusion}

This manuscript developed techniques for certified evaluations 
of locally H\"older continuous functions
at roots of polynomials along with an implementation in {\tt Maple}.  
These techniques were demonstrated on several problems
including certified bounds on critical values
and proving non-negativity of coefficients in
Taylor series expansions.  
Although this paper focused on roots of univariate polynomials, it is natural to extend 
to multivariate polynomial systems in~the~future.  

\section*{Acknowledgments}

JDH was supported in part by
NSF CCF 1812746. CDS was supported in part by Simons Foundation grant 360486.

\bibliographystyle{splncs04}

\section*{Appendix}

{\noindent \em Proof of Theorem~\ref{thm:coefficient-formula}.} 
Suppose that $C \neq 0$ such that $q(x) = C\cdot \prod_{i=1}^d (x - \alpha_i)$.  
Thus, we know 
\mbox{$q'(x) = C\cdot \sum_{i=1}^d \prod_{j \neq i} (x - \alpha_j)$} and $q'(\alpha_i) = C\cdot \prod_{j \neq i} (\alpha_i - \alpha_j) \neq 0$ for all $i$.
Let $p_i(x) = q(x)/(x - \alpha_i) = C\cdot\prod_{j \neq i} (x - \alpha_j)$.  
Hence, $p_i(\alpha_i) = q'(\alpha_i)$ and $p_i(\alpha_j) = 0$ if $j \neq i$. 
The polynomials $p_1,\dots,p_d$ are linearly independent 
since, if $\sum_{i=1}^d a_i p_i(x) = 0$, then evaluating at $x = \alpha_j$
yields $a_j \cdot q'(\alpha_j) = 0$ which implies $a_j = 0$.  
Thus, they must form a basis for the $d$-dimensional vector space of polynomials of degree at most 
$d-1$.

Since $p(x)$ has degree at most $d-1$, there are unique constants $a_i$ 
so that $\sum_{i=1}^d a_i p_i(x) = p(x)$.  
Evaluating at $x = \alpha_j$ yields $a_j q'(\alpha_j) = p(\alpha_j)$ so that
$a_j = p(\alpha_j)/q'(\alpha_j)$.  Therefore, for all $x\in\bC\setminus\{\alpha_1,\dots,\alpha_d\}$,
\begin{equation}\label{eq:partial-fraction-expansion} 
\hbox{\footnotesize $\displaystyle
\frac{p(x)}{q(x)} = \sum_{i=1}^d \frac{p(\alpha_i)}{q'(\alpha_i)} \frac{1}{x-\alpha_i} = \sum_{i=1}^d - \frac{p(\alpha_i)}{\alpha_i q'(\alpha_i)} \frac{1}{1 -x/ \alpha_i}.$}\end{equation} 

The terms in \eqref{eq:partial-fraction-expansion} have a Taylor series 
expansion centered at the origin that converge for all $x$ with $|x| < \min\{|\alpha_1|, \ldots, |\alpha_d|\}$ such that, as \eqref{eq:coefficient-formula} claims,
\[\hbox{\footnotesize $\displaystyle
\frac{p(x)}{q(x)} = \sum_{i=1}^d - \frac{p(\alpha_i)}{\alpha_i q'(\alpha_i)} \sum_{n=0}^\infty \alpha_i^{-n} x^n = \sum_{n=0}^\infty \left(-\sum_{i=1}^d  \frac{p(\alpha_i)}{\alpha_i q'(\alpha_i)} \alpha_i^{-n} \right) x^n.$} \]

\medskip\medskip

{\noindent \em Proof of Theorem~\ref{thm:eventual-behavior}.}
Clearly, one has $r_n = \frac{d^n}{dz^n}(p(z)/q(z))|_{z=0}$.  Since $p(x)$ and~$q(x)$ have real coefficients, $r_n$ is real for all $n \geq 0$.  
For $i\in\{1,\dots,d\}$, let $t^i_n = C_i \alpha_i^{-n}$ so that~\eqref{eq:coefficient-formula} 
reduces to $r_n = \sum_{i=1}^d t^i_n$. 
Moreover, $\alpha_1\in\bR\setminus\{0\}$ implies $C_1\in\bR\setminus\{0\}$.
Clearly, if $\alpha_1<0$, then $t^1_n$ is alternating in sign.

Consider the case when $\alpha_1>0$.
First, note that $t^1_n$ and $C_1$ always have the same sign.
The following derives a threshold $N$ such that $|r_n - t^1_n| < |t^1_n|$ for all $n > N$.  
Given such an $N$, $r_n$ will have the same sign as $t^1_n$ and $C_1$ 
for $n > N$ and the theorem will be proved.
To that end, since $(r_n - t^1_n)/t^1_n = \sum_{i=2}^d t^i_n/t^1_n$, 
\[\hbox{\footnotesize $\displaystyle
\frac{|r_n - t^1_n|}{|t^1_n|} \leq \sum_{i=2}^d \frac{|C_i|}{|C_1|}\frac{|\alpha_1|^n}{|\alpha_i|^n} \leq K \left(\frac{m}{M}\right)^n$}\] for all $n$. Since, by assumption, $m/M < 1$, there is a threshold $N$ so that $K (m/M)^n < 1$ and $|r_n - t^1_n| < |t^1_n|$ for all $n > N$.  We may take $N$ so that $K(m/M)^N=1$ or $N = \log(K)/\log(M/m)$ as claimed.

\end{document}